\newtheoremstyle{localthm}
	{5pt} 
	{5pt} 
	{\sl} 
	{} 
	{\bf} 
	{} 
	{.7em} 
	{} 
\newtheoremstyle{localrem}
	{5pt} 
	{5pt} 
	{\rm} 
	{} 
	{\bf} 
	{} 
	{.7em} 
	{} 
\theoremstyle{localthm}
\newtheorem{Theorem}{Theorem}[section]
\newtheorem{Lemma}[Theorem]{Lemma}
\newtheorem{Corollary}[Theorem]{Corollary}
\theoremstyle{localrem}
\def\bea{\begin{eqnarray*}}
\def\eea{\end{eqnarray*}}
\def\bs{\boldsymbol}
\def\Ab{\bs{A}}
\def\etab{\bs{\eta}}
\def\Gamb{\bs{\Gamma}}
\def\Mb{\bs{M}}
\def\Nb{\bs{N}}
\def\Sigb{\bs{\Sigma}}
\def\thb{\bs{\theta}}
\def\Wb{\bs{W}}
\def\xb{\bs{x}}
\def\Yb{\bs{Y}}
\def\zb{\bs{z}}
\def\Zb{\bs{Z}}
\def\B{\mathbb{B}}
\def\R{\mathbb{R}}
\def\Ex{\mathop{\mathrm{I\!E}}\nolimits}
\def\Pr{\mathop{\mathrm{I\!P}}\nolimits}
\def\Cov{\mathop{\mathrm{Cov}}\nolimits}
\def\Var{\mathop{\mathrm{Var}}\nolimits}
\def\argmin{\mathop{\mathrm{arg\,min}}}
\def\LL{\mathcal{L}}
\def\NN{\mathcal{N}}
\def\eps{\epsilon}
\begin{document}

\addtolength{\baselineskip}{+.27\baselineskip}

\thispagestyle{empty}

%


\title{Stochastic Search for\\Semiparametric Linear Regression Models}
\author{Lutz D\"{u}mbgen$^*$ (University of Bern)\\
	Richard J. Samworth$^\dagger$ (University of Cambridge)\\
	Dominic Schuhmacher$^*$ (University of Bern)}
\date{June 2011, revised October 2011}

\maketitle

\begin{abstract}
This paper introduces and analyzes a stochastic search method for parameter estimation in linear regression models in the spirit of \cite{BeranMillar1987}. The idea is to generate a random finite subset of a parameter space which will automatically contain points which are very close to an unknown true parameter. The motivation for this procedure comes from recent work of \cite{DSS2011} on regression models with log-concave error distributions.
\end{abstract}

\bigskip

\textbf{AMS (2000) subject classifications:} \
62G05, 62G09, 62G20, 62J05.

\textbf{${}^*$} \
Work supported by Swiss National Science Foundation.

\textbf{${}^\dagger$} \
Work supported by a Leverhulme Research Fellowship.


\section{Introduction}
\label{sec:Introduction}

This paper introduces and analyzes a stochastic search method for parameter estimation in linear regression models in the spirit of \cite{BeranMillar1987}. The idea is to generate a random finite subset of a parameter space which will automatically contain points which are very close to an unknown true parameter. The motivation for this procedure comes from recent work of \cite{DSS2011} on regression models with log-concave error distributions. Section~\ref{sec:Linear regression} reviews the latter setting.  In section~\ref{sec:Stochastic search} the stochastic search method is described and analyzed in detail. Our construction relies on the exchangeably weighted bootstrap as introduced by \cite{MasonNewton1992} and developed further by \cite{PraestgaardWellner1993}. While these papers are dealing with i.i.d.\ random elements, the present considerations will show that the exchangeably weighted bootstrap is also asymptotically valid in heteroscedastic linear regression models under mild regularity conditions. Thus it is a viable alternative to the wild bootstrap as proposed by \cite{Wu1986}. All proofs are deferred to section~\ref{sec:Proofs}.

\section{Linear regression with log-concave error distribution}
\label{sec:Linear regression}

Suppose that for integers $q$ and $n \ge q$ we observe $(\xb_{n1},Y_{n1}), (\xb_{n2},Y_{n2}), \ldots, (\xb_{nn},Y_{nn})$, where
$$
	Y_{ni} \ = \ \thb_n^\top \xb_{ni} + \eps_{ni}
$$
with an unknown parameter $\thb_n \in \R^q$, fixed design vectors $\xb_{n1}, \xb_{n2}, \ldots, \xb_{nn} \in \R^q$ and independent real random errors $\eps_{n1}, \eps_{n2}, \ldots, \eps_{nn}$ with mean zero. We assume that our regression model includes the constant functions, i.e.\ the column space of the design matrix $\bs{X}_n := [\xb_{n1}, \xb_{n2}, \ldots, \xb_{nn}]^\top$ contains the constant vector $(1)_{i=1}^n$.

\paragraph{Maximum likelihood estimation.}
Suppose that the errors $\eps_{ni}$ are identically distributed with density $f_n$ such that $\psi_n := \log f_n$ is concave. One may estimate $f_n$ and $\thb_n$ consistently via maximum likelihood as follows: Let $\Phi_o$ be the set of all concave functions $\phi : \R \to [-\infty,\infty)$ such that
$$
	\int_{\R} e_{}^{\phi(y)} \, dy \ = \ 1
	\quad\text{and}\quad
	\int_{\R} y e_{}^{\phi(y)} \, dy \ = \ 0 .
$$
Then we define $(\hat{\psi}_n, \hat{\thb}_n)$ to be a maximizer of
$$
		\sum_{i=1}^n \phi(Y_{ni} - \etab^\top \xb_{ni}) ,
$$
over all pairs $(\phi,\etab) \in \Phi_o \times \R^q$, provided such a maximizer exists. It follows indeed from \cite{DSS2011} that $(\hat{\psi}_n, \hat{\thb}_n)$ is well-defined almost surely if $n \geq q+1$. Precisely, the MLE exists whenever $\bs{Y}_n = (Y_{ni})_{i=1}^n$ is not contained in the column space of the design matrix $\bs{X}_n$.  Simulation results in \cite{DSS2011} indicate that $\hat{\thb}_n$ may perform substantially better than the ordinary least squares estimator, for instance when the errors have a skewed, log-concave density.

\paragraph{Consistency.}
General results of \cite{DSS2011} imply that the MLE is consistent in the following sense, where asymptotic statements refer to $n \to \infty$, unless stated otherwise:

\begin{Theorem}
\label{thm:Consistency}
Suppose that $q = q(n)$ such that $q(n)/n \to 0$ and
$$
	\int_{\R} \bigl| f_n(y) - f(y) \bigr| \, dy \ \to \ 0
$$
for some probability density $f$. Then $\hat{f}_n := \exp(\hat{\psi}_n)$ satisfies
$$
	\int_{\R} \bigl| \hat{f}_n(y) - f_n(y) \bigr| \, dy
	\ \to_p \ 0 ,
$$
and
$$
	\frac{1}{n} \sum_{i=1}^n \min \bigl( \bigl| (\hat{\thb}_n - \thb_n)^\top \xb_{ni} \bigr|, 1 \bigr)
	\ \to_p \ 0 .
$$
\end{Theorem}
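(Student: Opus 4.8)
The plan is to derive the theorem from the general consistency results of \cite{DSS2011}, the real content being to verify that the present hypotheses, $q(n)/n \to 0$ and $\int_\R |f_n - f|\,dy \to 0$, entail the regularity conditions needed there, and to track how the growing dimension enters. The starting point is the trivial consequence of optimality of the MLE: since $f_n$ is a density with mean zero, $\psi_n = \log f_n$ lies in $\Phi_o$, so $(\psi_n,\thb_n)$ is a competitor and
$$
	\frac1n\sum_{i=1}^n\hat\psi_n(Y_{ni}-\hat\thb_n^\top\xb_{ni})
	\ \ge\
	\frac1n\sum_{i=1}^n\psi_n(\eps_{ni}) .
$$
I would then bring in the empirical distribution $\hat Q_n$ of the fitted residuals $Y_{ni}-\hat\thb_n^\top\xb_{ni}$ and, following \cite{DSS2011}, the associated population object: for a distribution $Q$ on $\R$ with finite first moment, the log-concave projection $\psi^*(Q)$, i.e.\ the maximizer of $\phi\mapsto\int\phi\,dQ-\int e^\phi\,dy$ over concave $\phi$. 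Two facts to quote are (a) that $\hat\psi_n$ is essentially $\psi^*(\hat Q_n)$, the centering built into $\Phi_o$ being harmless because the design contains the constant vector, so that shifts in the argument can be absorbed into $\hat\thb_n$; and (b) that $Q\mapsto\psi^*(Q)$ is continuous in the Mallows ($d_1$) topology, with $\psi^*$ of the $f_n$-law of $\eps$ equal to $\psi_n$ and with corresponding limit $\log f$.

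The data side then consists of upgrading the displayed inequality to a near-equality
$$
	\frac1n\sum_{i=1}^n\hat\psi_n(Y_{ni}-\hat\thb_n^\top\xb_{ni})
	\ =\
	\frac1n\sum_{i=1}^n\psi_n(\eps_{ni})+o_p(1),
$$
which then pins down $\hat Q_n$. For the matching upper bound one needs a uniform law of large numbers,
$$
	\sup_{\phi,\,\etab}\Bigl|\frac1n\sum_{i=1}^n\bigl(\phi(Y_{ni}-\etab^\top\xb_{ni})-\Ex\,\phi(Y_{ni}-\etab^\top\xb_{ni})\bigr)\Bigr|\ =\ o_p(1),
$$
the supremum being over concave $\phi$ with $\int e^\phi=1$ and over $\etab\in\R^{q(n)}$. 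This is plausible because such $\phi$ are uniformly bounded above and decay at least linearly, so after a harmless truncation the relevant function class has polynomial covering numbers, and the only dimension-dependence is through the $q(n)$-parameter family of linear functionals $\etab\mapsto\etab^\top\xb_{ni}$, whose contribution is controlled precisely by $q(n)/n\to0$. Feeding the near-equality into the strict concavity of the projection functional (a Kullback--Leibler-type gap) yields $\hat Q_n\to Q$ in $d_1$, hence $\int_\R|\hat f_n-f_n|\,dy\to_p0$ by (b).

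For the regression parameter, write $\delta_{ni}:=(\hat\thb_n-\thb_n)^\top\xb_{ni}$, so that the fitted residuals are $\eps_{ni}-\delta_{ni}$. The mechanism is that a non-vanishing $\frac1n\sum_i\min(|\delta_{ni}|,1)$ would make $\hat Q_n$ a genuinely smeared version of the empirical law of the $\eps_{ni}$, and I would aim for a lower bound, valid uniformly over choices of the shifts up to an additive $o_p(1)$, of the form
$$
	\frac1n\sum_{i=1}^n\psi_n(\eps_{ni})-\sup_{\phi}\frac1n\sum_{i=1}^n\phi(\eps_{ni}-\delta_{ni})
	\ \ge\ c\,\frac1n\sum_{i=1}^n\min(|\delta_{ni}|,1)-o_p(1)
$$
with $c>0$ depending only on $f$; combined with the displayed inequality and the upper bound of the previous step this forces $\frac1n\sum_i\min(|\delta_{ni}|,1)\to_p0$. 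Truncation at $1$ is natural here because log-concave densities have at worst linear tails, so large discrepancies contribute only boundedly to the log-likelihood deficit.

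The main obstacle -- and the reason this is not a one-line corollary -- is the simultaneous handling of the infinite-dimensional nuisance $\hat\psi_n\in\Phi_o$, the growing-dimensional Euclidean parameter $\hat\thb_n\in\R^{q(n)}$, and the triangular-array density $f_n$: one must obtain the uniform empirical-process bound above with the $q(n)/n\to0$ scaling and uniformly in $n$, and the quantitative lower bound on the log-likelihood deficit in terms of $\frac1n\sum_i\min(|\delta_{ni}|,1)$ in a way that is robust to $q\to\infty$. I would isolate exactly these two estimates as the non-routine lemmas and then obtain both conclusions of the theorem by substituting them into the otherwise deterministic stability arguments sketched above, quoting \cite{DSS2011} for the existence of the MLE and for the properties of the projection $\psi^*$.
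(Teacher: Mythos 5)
The paper offers no proof of this theorem: it is stated as a direct consequence of the general results of \cite{DSS2011}, and the Proofs section begins only with Corollary~\ref{cor:Consistency}. So the relevant comparison is with the argument in \cite{DSS2011}. Your overall architecture --- near-optimality of $(\hat\psi_n,\hat\thb_n)$, the log-concave projection $\psi^*(Q)$ and its continuity in the Mallows metric, absorption of the centering into $\hat\thb_n$ via the constant column, and an identifiability mechanism saying that smearing the residuals strictly decreases the projected log-likelihood --- is the right skeleton and does match that source.

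However, the two estimates you explicitly defer are exactly where all the difficulty sits, and both are problematic as stated. First, the uniform law of large numbers over \emph{all} concave $\phi$ with $\int e^\phi=1$ and \emph{all} $\etab\in\R^{q(n)}$ is false as written: that class is not uniformly bounded above (arbitrarily peaked $\phi$ are admissible, contributing terms of size $\max\phi/n$ that need not be negligible), and no localization of $\hat\psi_n$ or of $\hat\thb_n$ has been established at the point where you invoke it. The argument in \cite{DSS2011} instead first controls the maximizer (boundedness of $\max\hat\psi_n$ and tightness of the residual distribution) and then uses (upper semi)continuity of the functional $Q\mapsto\sup_\phi\int\phi\,dQ$ along Mallows-convergent sequences, rather than a raw ULLN over the full class; the ``harmless truncation'' you appeal to is precisely the step that has to be justified. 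Second, the quantitative lower bound
$$
	\frac1n\sum_{i=1}^n\psi_n(\eps_{ni})-\sup_{\phi}\frac1n\sum_{i=1}^n\phi(\eps_{ni}-\delta_{ni})
	\ \ge\ c\,\frac1n\sum_{i=1}^n\min(|\delta_{ni}|,1)-o_p(1)
$$
with a constant $c>0$ depending only on $f$ is substantially stronger than what is available and is nowhere proved. The actual mechanism is qualitative: the log-concave approximation functional satisfies $L(Q\ast R)\le L(Q)$ with equality only for degenerate $R$, and one argues by contradiction along subsequences --- if $\frac1n\sum_i\min(|\delta_{ni}|,1)$ did not vanish, the empirical distribution of the shifts would have a nondegenerate limit along a subsequence, contradicting near-optimality. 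Replacing this compactness argument by a linear-in-the-deficit inequality uniform in the shift configuration and in $n$ would itself require a nontrivial proof. Until these two lemmas are supplied (or correctly re-stated and traced to specific results of \cite{DSS2011}), the proposal is a plan rather than a proof.
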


For fixed dimension $q$ and under additional conditions on the design points $\xb_{ni}$, Theorem~\ref{thm:Consistency} implies a stronger consistency property of $\hat{\thb}_n$, where $\|\cdot\|$ denotes standard Euclidean norm, and $\lambda_{\rm min}(\bs{A})$ denotes the minimal eigenvalue of a symmetric matrix $\bs{A}$:

\begin{Corollary}
\label{cor:Consistency}
Suppose that the assumptions of Theorem~\ref{thm:Consistency} are satisfied, where the dimension $q$ is fixed. In addition, suppose that
$$
	\liminf_{n \to \infty} \, \lambda_{\rm min} \Bigl(
		\frac{1}{n} \sum_{i=1}^n \xb_{ni}^{}\xb_{ni}^\top \Bigr)
	\ > \ 0
$$
and
$$
	\lim_{n,c \to \infty} \, \frac{1}{n} \sum_{i=1}^n
		1 \bigl\{ \|\xb_{ni}\| > c \bigr\} \|\xb_{ni}\|^2
	\ = \ 0 .
$$
Then
$$
	\bigl\| \hat{\thb}_n - \thb_n \bigr\| \ \to_p \ 0 .
$$
\end{Corollary}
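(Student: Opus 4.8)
\emph{Proof proposal.} The plan is to deduce the Euclidean consistency from the second conclusion of Theorem~\ref{thm:Consistency} by way of a purely deterministic estimate. Write $g_n(\bs\delta) := n^{-1}\sum_{i=1}^n \min\bigl(|\bs\delta^\top\xb_{ni}|,1\bigr)$, so that Theorem~\ref{thm:Consistency} asserts $g_n(\hat\thb_n - \thb_n) \to_p 0$, and set $c_0 := \liminf_{n\to\infty} \lambda_{\rm min}\bigl(n^{-1}\sum_i \xb_{ni}\xb_{ni}^\top\bigr) > 0$. I claim that for every $\epsilon > 0$ there are $\rho = \rho(\epsilon) > 0$ and $n_\epsilon$ with $\inf_{\|\bs\delta\|\ge\epsilon} g_n(\bs\delta) \ge \rho$ for all $n \ge n_\epsilon$. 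Granting this, for $n \ge n_\epsilon$ the event $\{\|\hat\thb_n - \thb_n\| \ge \epsilon\}$ is contained in $\{g_n(\hat\thb_n - \thb_n) \ge \rho\}$, so
$$\Pr\bigl(\|\hat\thb_n - \thb_n\| \ge \epsilon\bigr) \ \le\ \Pr\bigl(g_n(\hat\thb_n - \thb_n) \ge \rho\bigr) \ \longrightarrow\ 0 ;$$
since $\epsilon > 0$ was arbitrary, $\|\hat\thb_n - \thb_n\| \to_p 0$.

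To prove the deterministic claim I would first reduce to unit vectors by monotonicity: for every $\alpha \ge 0$ the map $r \mapsto \min(r\alpha, 1)$ is nondecreasing on $[0,\infty)$, hence so is $r \mapsto g_n(r\bs u)$ for each fixed $\bs u$; writing an arbitrary $\bs\delta$ with $\|\bs\delta\| \ge \epsilon$ as $\bs\delta = r\bs u$ with $\|\bs u\| = 1$ and $r \ge \epsilon$ gives $g_n(\bs\delta) \ge g_n(\epsilon\bs u)$. It thus suffices to bound $\inf_{\|\bs u\|=1} g_n(\epsilon\bs u)$ from below. Next I truncate: whenever $\|\xb_{ni}\| \le M$ we have $|\bs u^\top\xb_{ni}| \le M$, and the elementary inequality $\min(\epsilon a, 1) \ge \kappa\, a^2$ holds for all $a \in [0, M]$ with $\kappa = \kappa(\epsilon, M) := M^{-1}\min(\epsilon, M^{-1}) > 0$ (check the cases $\epsilon a \le 1$ and $\epsilon a > 1$ separately). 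Consequently, with $\bs A_n := n^{-1}\sum_i \xb_{ni}\xb_{ni}^\top$ and using $(\bs u^\top\xb_{ni})^2 \le \|\xb_{ni}\|^2$,
$$g_n(\epsilon\bs u) \ \ge\ \kappa \cdot \frac1n \sum_{i:\, \|\xb_{ni}\| \le M} (\bs u^\top\xb_{ni})^2 \ \ge\ \kappa\Bigl( \bs u^\top\bs A_n\bs u \ -\ \frac1n\sum_{i:\, \|\xb_{ni}\| > M} \|\xb_{ni}\|^2 \Bigr) .$$
Finally, the second design hypothesis --- read as ``for every $\eta>0$ there are $N, C$ with $n^{-1}\sum_i 1\{\|\xb_{ni}\|>c\}\|\xb_{ni}\|^2 \le \eta$ for all $n\ge N$, $c\ge C$'' --- lets me fix $M = C(c_0/2)$ and $n_1 = N(c_0/2)$ so that the subtracted tail is at most $c_0/2$ for $n \ge n_1$; choosing also $n_0$ with $\lambda_{\rm min}(\bs A_n) \ge c_0$, hence $\bs u^\top\bs A_n\bs u \ge c_0$ for all unit $\bs u$, for $n \ge n_0$, I obtain $g_n(\epsilon\bs u) \ge \kappa c_0/2 =: \rho$ for all $n \ge n_\epsilon := \max(n_0, n_1)$ and all unit $\bs u$.

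The genuinely substantive step is the displayed lower bound, where the bounded, scale-insensitive loss $g_n$ is traded against the quadratic form $\bs u^\top\bs A_n\bs u$: the quadratic minorant $\min(\epsilon a,1) \ge \kappa a^2$ is available only on a bounded range of $a$, so one must simultaneously exploit the uniform lower bound on $\lambda_{\rm min}(\bs A_n)$ and the uniform-integrability-type control on the large $\xb_{ni}$ to ensure that discarding the indices with $\|\xb_{ni}\| > M$ costs only a fraction of that lower bound; a point to watch is the order of quantifiers in that hypothesis, which is exactly what permits a single truncation level $M$ valid for all large $n$. The monotonicity reduction, the elementary inequality, and the closing probabilistic step via Theorem~\ref{thm:Consistency} are all routine.
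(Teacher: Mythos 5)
Your proof is correct and follows essentially the same route as the paper's: both reduce the claim to a uniform deterministic lower bound on $g_n$ over $\{\|\bs{\delta}\|\ge\eps\}$ by passing to unit vectors, minorizing $\min(\cdot,1)$ by a quadratic on a truncated range of design vectors, and playing the eigenvalue condition off against the tail condition (the paper ties the truncation level to $1/\eps$ via the inequality $\min(a,1)\ge\min(a^2,1)\ge a^2 - 1\{\eps\|\xb_{ni}\|>1\}\eps^2\|\xb_{ni}\|^2$ rather than introducing a separate level $M$, but that is only bookkeeping). One cosmetic nit: $\liminf_n \lambda_{\rm min}(\bs{A}_n)=c_0$ only guarantees $\lambda_{\rm min}(\bs{A}_n)\ge 3c_0/4$, say, for large $n$, not $\ge c_0$, so adjust your constants (e.g.\ tail $\le c_0/4$, $\rho=\kappa c_0/2$) accordingly.
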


\section{Stochastic search}
\label{sec:Stochastic search}

Computing the MLE $(\hat{\psi}_n, \hat{\thb}_n)$ from the previous section is far from trivial. For any fixed $\etab \in \R^q$, the profile log-likelihood
$$
	L_n(\etab) \ := \ \max_{\phi \in \Phi_o}
		\sum_{i=1}^n \phi(Y_{ni} - \xb_{ni}^\top \etab)
$$
can be computed quickly by means of algorithms described by \cite{DR2011}. Furthermore, as shown by \cite{DSS2011}, $L_n(\cdot)$ is continuous and coercive in that $L_n(\etab) \to - \infty$ as $\|\etab\| \to \infty$. However, numerical examples reveal that $L_n(\cdot)$ is not concave or even unimodal in the sense that the sets $\bigl\{ \etab \in \R^q : L_n(\etab) \ge c \bigr\}$, $c \in \R$, are convex.

To deal with this problem, we resort to a stochastic search strategy in the spirit of \cite{BeranMillar1987}. In particular, we construct a random finite subset $\Theta_n$ of $\R^q$ such that
\begin{equation}
\label{eq:Asymptopia}
	\min_{\etab \in \Theta_n} \, n^{1/2} \|\etab - \thb_n\|
	\ \to_p \ 0 .
\end{equation}
Then we redefine $\hat{\thb}_n$ to be a maximizer of $L_n(\cdot)$ over $\Theta_n$. A close inspection of the proofs reveals that the consistency results in Theorem~\ref{thm:Consistency} and Corollary~\ref{cor:Consistency} carry over to this new version. Moreover, if the original MLE $\hat{\thb}_n$ satisfies $\|\hat{\thb}_n - \thb_n\| = O_p(n^{-1/2})$, which is an open conjecture of ours, the same would be true for the stochastic search version.

\paragraph{Exchangeably weighted bootstrap.}
For the remainder of this section we describe and analyze a particular construction of $\Theta_n$: Let $\Wb_n^{(1)}, \Wb_n^{(2)}, \Wb_n^{(3)}, \ldots$ be i.i.d.\ random weight vectors in $[0,\infty)^n$, independent from the data $(\bs{X}_n,\bs{Y}_n)$. Then we consider the ordinary least squares estimator
$$
	\check{\thb}_n^{(0)}
	\ := \ \argmin_{\etab \in \R^q}
		\sum_{i=1}^n (Y_{ni} - \xb_{ni}^\top \etab)^2
	\ = \ \Bigl( \sum_{i=1}^n \xb_{ni}^{} \xb_{ni}^\top \Bigr)^{-1}
		\sum_{i=1}^n Y_{ni}^{} \xb_{ni}^{}
$$
and the randomly weighted least squares estimators
$$
	\check{\thb}_n^{(b)}
	\ := \ \argmin_{\etab \in \R^q}
		\sum_{i=1}^n W_{ni}^{(b)} (Y_{ni} - \xb_{ni}^\top \etab)^2
	\ = \ \Bigl( \sum_{i=1}^n W_{ni}^{(b)} \xb_{ni}^{} \xb_{ni}^\top \Bigr)^{-1}
		\sum_{i=1}^n W_{ni}^{(b)} Y_{ni}^{} \xb_{ni}^{}
$$
for $b = 1, 2, 3, \ldots$, where $\Wb_n^{(b)} = (W_{ni}^{(b)})_{i=1}^n$. If $\sum_{i=1}^n \xb_{ni}^{}\xb_{ni}^\top$ or $\sum_{i=1}^n W_{ni}^{(b)} \xb_{ni}^{} \xb_{ni}^\top$ happens to be singular, we interpret its inverse as generalized inverse. If we define
$$
	\Theta_n \ := \ \bigl\{ \check{\thb}_n^{(b)} : 0 \le b \le B_n \bigr\}
$$
with integers $B_n \to \infty$, the subsequent considerations imply that \eqref{eq:Asymptopia} is satisfied under certain conditions.

\paragraph{Asymptotics.}
We assume that the random weight vectors $\Wb_n^{} := \Wb_n^{(b)}$ satisfy the following three conditions:\\[1ex]
(W.1) \ The random variables $W_{n1}, W_{n2}, \ldots, W_{nn}$ are exchangeable and satisfy
$$
	\sum_{i=1}^n W_{ni} \ \equiv \ n .
$$
(W.2) \ For a given number $c > 0$, \
$$
	\frac{1}{n} \sum_{i=1}^n (W_{ni} - 1)^2 \ \to_p \ c^2 .
$$
(W.3) \ As $n \to \infty$ and $K \to \infty$,
$$
	\frac{1}{n} \sum_{i=1}^n W_{ni}^2 1\{W_{ni}^{} \ge K\} \ \to_p \ 0 .
$$

Note that (W.1) implies that $\Ex W_{n1} = 1$. In fact, when (W.1) holds, conditions~(W.2-3) are a consequence of the following moment conditions:\\[1ex]
(W.4) \ For a given number $c > 0$, \
$$
	\Var(W_{n1}) \ \to \ c^2 .
$$
Moreover,
$$
	\limsup_{n \to \infty} \mathrm{Cov}(W_{n1}^2,W_{n2}^2) \ \le \ 0
	\quad\text{and}\quad
	\limsup_{n \to \infty} \Ex(W_{n1}^4) \ < \ \infty .
$$

To see this, observe that under (W.1) and (W.4),
\bea
	\Ex \biggl\{\frac{1}{n}\sum_{i=1}^n (W_{ni}-1)^2\biggr\}
	& = & \Var(W_{n1}) \ \to \ c^2
		\quad\text{and} \\
	\Var \biggl\{\frac{1}{n}\sum_{i=1}^n (W_{ni}-1)^2\biggr\}
	& = & \Var\biggl(\frac{1}{n}\sum_{i=1}^n W_{ni}^2\biggr) \\
	& = & \frac{1}{n} \Var(W_{n1}^2) + \frac{n-1}{n} \Cov(W_{n1},W_{n2}) \\
	& \le & \frac{1}{n} \Ex(W_{n1}^4) + o(1) \ \to \ 0
\eea
as $n \to \infty$, which proves (W.2). Moreover,
$$
	\Ex\biggl\{\frac{1}{n}\sum_{i=1}^n W_{ni}^2 1\{W_{ni}^{} \ge K\}\biggr\}
	\ = \ \Ex(W_{n1}^2 1\{W_{n1}^{} \ge K\})
	\ \le \ \Ex(W_{n1}^4)/K^2
	\ \to \ 0
$$
as $n \to \infty$ and $K \to \infty$. This proves (W.3).

As to the data $(\bs{X}_n,\bs{Y}_n)$, we drop the assumption of identically distributed errors and only require the $\eps_{ni}$ to have mean zero and finite variances. Further we assume that the following three conditions are satisfied:\\[1ex]
(D.1) \ For a fixed positive definite matrix $\bs{\Gamma} \in \R^{q\times q}$,
$$
	\frac{1}{n} \sum_{i=1}^n \xb_{ni}^{} \xb_{ni}^\top
		\ \to \ \bs{\Gamma} .
$$
(D.2) \ For a fixed matrix $\bs{\Gamma}_\eps \in \R^{q\times q}$,
$$
	\frac{1}{n} \sum_{i=1}^n \Var(\eps_{ni}) \xb_{ni}^{} \xb_{ni}^\top
		\ \to \ \bs{\Gamma}_\eps .
$$
(D.3) \ With $L_{ni} := n^{-1} (1 + \eps_{ni}^2) \|\xb_{ni}\|^2$,
$$
	\Ex \sum_{i=1}^n L_{ni} \min(L_{ni},1) \ \to \ 0 .
$$

Note that (D.1-2) implies that $\Ex \sum_{i=1}^n L_{ni} \to \mathrm{trace}(\Gamb + \Gamb_\eps)$.  Even under the weaker condition $\Ex \sum_{i=1}^n L_{ni} = O(1)$, condition~(D.3) is easily shown to be equivalent to the following Lindeberg-type condition:\\[1ex]
(D.3$^\prime$) \ For any fixed $\delta > 0$,
$$
	\Ex \sum_{i=1}^n L_{ni} 1\{L_{ni} > \delta\} \ \to \ 0 .
$$

\begin{Theorem}
\label{thm:Asymptopia}
Suppose that conditions (W.1-3) and (D.1-3) are satisfied.

\noindent
\textbf{(a)} \ For any fixed integer $B \ge 1$,
\bea
	n^{1/2} \bigl( \check{\thb}_n^{(b)} - \thb_n \bigr)_{b=0}^B
	& \to_{\LL}^{} &
	\Gamb^{-1} \bigl( \Zb^{(0)} + 1\{b \ge 1\} c \Zb^{(b)} \bigr)_{b=0}^B
\eea
with independent random vectors $\Zb^{(0)}, \Zb^{(1)}, \Zb^{(2)}, \ldots, \Zb^{(B)}$ having distribution $\mathcal{N}_q(\bs{0}, \bs{\Gamma}_\eps)$.

\noindent
\textbf{(b)} \ For arbitrary integers $B_n \to \infty$,
$$
	\min_{b=1,2,\ldots,B_n} n^{1/2} \bigl\| \check{\thb}_n^{(b)} - \thb_n^{} \bigr\|
	\ \to_p \ 0 .
$$
\end{Theorem}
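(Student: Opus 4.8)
My plan is to derive part~(b) from part~(a) via a second-moment argument, so the bulk of the work is in establishing the joint central limit theorem in part~(a). For part~(a), I would first write the randomly weighted least squares estimator in its linearized form. On the event that $\frac1n\sum_i W_{ni}^{(b)}\xb_{ni}\xb_{ni}^\top$ is nonsingular (which has probability tending to one by (W.1-3) and (D.1)), we have
\[
	n^{1/2}\bigl(\check{\thb}_n^{(b)} - \thb_n\bigr)
	\ = \ \Bigl(\frac1n\sum_{i=1}^n W_{ni}^{(b)}\xb_{ni}\xb_{ni}^\top\Bigr)^{-1}
		\, n^{-1/2}\sum_{i=1}^n W_{ni}^{(b)}\eps_{ni}\xb_{ni} ,
\]
with the convention $W_{ni}^{(0)}\equiv 1$. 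By (D.1), a law of large numbers argument using (W.1-3) shows the matrix factor converges in probability to $\Gamb^{-1}$ uniformly over the finitely many $b=0,1,\dots,B$, so by Slutsky it suffices to prove the joint CLT for the score-type vectors $S_n^{(b)} := n^{-1/2}\sum_i W_{ni}^{(b)}\eps_{ni}\xb_{ni}$.

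The key step is the Cram\'er--Wold / Lindeberg argument for $\bigl(S_n^{(0)},\dots,S_n^{(B)}\bigr)$. Fix vectors $\bs{u}_0,\dots,\bs{u}_B\in\R^q$ and consider $\sum_{b=0}^B \bs{u}_b^\top S_n^{(b)} = n^{-1/2}\sum_{i=1}^n \bigl(\sum_b W_{ni}^{(b)}\,\bs{u}_b^\top\xb_{ni}\bigr)\eps_{ni}$. The subtlety is that within a fixed $b\ge 1$ the weights $W_{n1}^{(b)},\dots,W_{nn}^{(b)}$ are exchangeable but not independent, so I would condition on the weight vectors $\Wb_n^{(1)},\dots,\Wb_n^{(B)}$ and apply a Lindeberg CLT to the resulting sum of conditionally independent (in $i$) mean-zero terms; here (D.3)/(D.3$'$) supplies the Lindeberg condition after checking the weight-dependent truncation is controlled using (W.3). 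The conditional variance is $n^{-1}\sum_i \Var(\eps_{ni})\bigl(\sum_b W_{ni}^{(b)}\,\bs{u}_b^\top\xb_{ni}\bigr)^2$; expanding the square and using (W.1) (giving $\Ex W_{ni}^{(b)}=1$, and $n^{-1}\sum_i W_{ni}^{(b)}\to 1$), (W.2) (giving $n^{-1}\sum_i (W_{ni}^{(b)}-1)^2\to_p c^2$ for each $b$), independence across $b$, and (D.2), this conditional variance converges in probability to
\[
	\bs{u}_0^\top\Gamb_\eps\bs{u}_0 \ + \ c^2\sum_{b=1}^B \bs{u}_b^\top\Gamb_\eps\bs{u}_b
	\ = \ \Var\Bigl(\textstyle\sum_b \bs{u}_b^\top\bigl(\Zb^{(0)} + 1\{b\ge1\}c\Zb^{(b)}\bigr)\Bigr).
\]
The cross terms $n^{-1}\sum_i \Var(\eps_{ni})(W_{ni}^{(b)}-1)(W_{ni}^{(b')}-1)(\bs{u}_b^\top\xb_{ni})(\bs{u}_{b'}^\top\xb_{ni})$ for $b\neq b'$ vanish because the weight vectors are independent and centered at $1$ in mean; I will need to justify exchanging a conditional expectation with a limit here, which is where a uniform integrability bound from $\limsup\Ex W_{n1}^4<\infty$ (via (W.4), or directly from (W.2-3)) does the job. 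Combining, a conditional Lindeberg CLT plus dominated convergence yields $\sum_b \bs{u}_b^\top S_n^{(b)}\to_{\LL}\mathcal{N}\bigl(0,\,\bs{u}_0^\top\Gamb_\eps\bs{u}_0 + c^2\sum_{b\ge1}\bs{u}_b^\top\Gamb_\eps\bs{u}_b\bigr)$, which is exactly the Cram\'er--Wold statement for part~(a).

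For part~(b), set $D_n^{(b)} := n^{1/2}\|\check{\thb}_n^{(b)} - \thb_n\|$. By part~(a) with $B=1$, the pair $(D_n^{(0)},D_n^{(1)})$ converges in law, and in particular, conditionally on the data, $D_n^{(1)}$ is asymptotically (in a suitable sense) distributed like $\|\Gamb^{-1}(\Zb^{(0)}+c\Zb^{(1)})\|$ with $\Zb^{(1)}$ a fresh Gaussian; the crucial point is that the limiting law of $D_n^{(1)}$ given $(\Xb_n,\Yb_n)$ has no atom at $0$ and, more importantly, as a distribution on $[0,\infty)$ it assigns positive mass to every neighborhood of $0$ (since $\Gamb_\eps$ need not be degenerate, but even if it is, $\Zb^{(0)}$ contributes a continuous spread so $\Gamb^{-1}(\Zb^{(0)}+c\Zb^{(1)})$ can be made small with positive probability once we also vary $\Zb^{(1)}$ — one has to handle the degenerate-$\Gamb_\eps$ case separately since then $D_n^{(1)}=D_n^{(0)}$ and the claim is trivial). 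Concretely, I would show that for every $\delta>0$, $\Pr\bigl(D_n^{(1)}<\delta\mid \Xb_n,\Yb_n\bigr)$ is bounded below in probability by some $p(\delta)>0$; then, since the $\Wb_n^{(1)},\dots,\Wb_n^{(B_n)}$ are i.i.d.\ given the data, $\Pr\bigl(\min_{b\le B_n}D_n^{(b)}\ge\delta\mid\Xb_n,\Yb_n\bigr)\le(1-p(\delta)+o_p(1))^{B_n}\to_p 0$, and taking expectations gives $\min_{b\le B_n}D_n^{(b)}\to_p 0$. The main obstacle is making the conditional-probability lower bound rigorous: part~(a) gives only unconditional joint convergence, so I must upgrade it to a statement about the conditional law of $\check{\thb}_n^{(1)}$ given the data — most cleanly by proving a conditional CLT (an almost-sure or in-probability conditional version of the Lindeberg argument above, holding the data fixed and using only the randomness in $\Wb_n^{(1)}$), which is the standard device in the exchangeably weighted bootstrap literature (Mason--Newton, Praestgaard--Wellner) and is exactly what conditions (W.1-3) together with (D.1-3) are designed to deliver.
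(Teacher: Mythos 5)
Your part~(a) follows essentially the paper's route: linearize the weighted least--squares estimator, reduce via Slutsky to a joint CLT for the score vectors, and prove that CLT by conditioning on the weight vectors and applying Lindeberg (the paper does this in multivariate form rather than via Cram\'er--Wold, an immaterial difference). The one step you assert rather than prove is the convergence of the conditional variances, e.g.\ $n^{-1}\sum_i\Var(\eps_{ni})(W_{ni}^{(b)})^2(\bs{u}_b^\top\xb_{ni})^2\to_p(1+c^2)\,\bs{u}_b^\top\Gamb_\eps\bs{u}_b$: condition~(W.2) controls only the \emph{unweighted} average $n^{-1}\sum_i(W_{ni}-1)^2$, and passing to averages weighted by the data-dependent array $\Var(\eps_{ni})(\bs{u}_b^\top\xb_{ni})^2$ is exactly where the paper needs its Lemma~\ref{lem:Auxiliary} (an $L^1$-concentration bound for $\sum_iV_i\Mb_{ni}$ with $(V_i)_i$ a random permutation of a fixed vector), combined with the truncations supplied by (W.3) and (D.3); this is the main technical content of part~(a), and your sketch leaves it implicit.

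For part~(b) you take a genuinely different route. The paper needs no conditional limit theorem: it bounds $\min_{b\le B_n}$ by $\min_{b\le B}$ for any fixed $B$, applies part~(a) and the portmanteau theorem, and then invokes Theorem~\ref{thm:Limit experiment} (the nearest-neighbour Weibull limit) to show that $\min_{b\le B}\|\Gamb^{-1}(\Zb^{(0)}+c\Zb^{(b)})\|\to_p 0$ as $B\to\infty$. Your argument instead exploits the conditional i.i.d.\ structure of the draws to get a bound of the form $(1-p(\delta))^{B_n}\to 0$; this is viable but demands strictly more than part~(a), namely a conditional CLT for $\LL\bigl(n^{1/2}(\check{\thb}_n^{(1)}-\check{\thb}_n^{(0)})\,\big|\,\Yb_n\bigr)$ that is locally uniform enough to lower-bound small-ball probabilities around the data-dependent (merely tight) centre $n^{1/2}(\check{\thb}_n^{(0)}-\thb_n)$ --- the Hoeffding device the paper uses elsewhere gives conditional convergence only for a fixed bounded continuous test function. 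What you gain is independence from the nearest-neighbour theorem; what the paper's route gains is that it needs only unconditional fixed-$B$ convergence and, via Corollary~\ref{cor:Limit experiment}, also delivers the rate $O_p(B^{-1/q})$. Finally, your parenthetical on degenerate $\Gamb_\eps$ is off: a singular nonzero $\Gamb_\eps$ does not give $D_n^{(1)}=D_n^{(0)}$; the correct observation is that $\Zb^{(0)}$ and the $\Zb^{(b)}$ are supported on the same subspace, so one argues within $\mathrm{range}(\Gamb_\eps)$, as the remark after Corollary~\ref{cor:Limit experiment} indicates.
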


Part~(a) of this theorem is illustrated in Figure~\ref{fig:Ellipses}. Asymptotically, $\check{\thb}_n^{(0)}$ (depicted as $\bullet$) behaves like $\thb_n$ (depicted as $\star$) plus $n^{-1/2} \Gamb^{-1} \Zb^{(0)}$. From the latter point one gets to $\check{\thb}_n^{(b)}$, $b \ge 1$, by adding another Gaussian random vector $c n^{-1/2} \Gamb^{-1} \Zb^{(b)}$. Writing $\LL(A)$ for the law of a random vector $A$, the ellipses with broken lines indicate $\LL \bigl( \check{\thb}_n^{(0)} \bigr)$, while the ellipses with solid lines indicate $\LL \bigl( \check{\thb}_n^{(b)} \,\big|\, \Yb_n \bigr)$.

\begin{figure}
\centering
\includegraphics[width=0.7\textwidth]{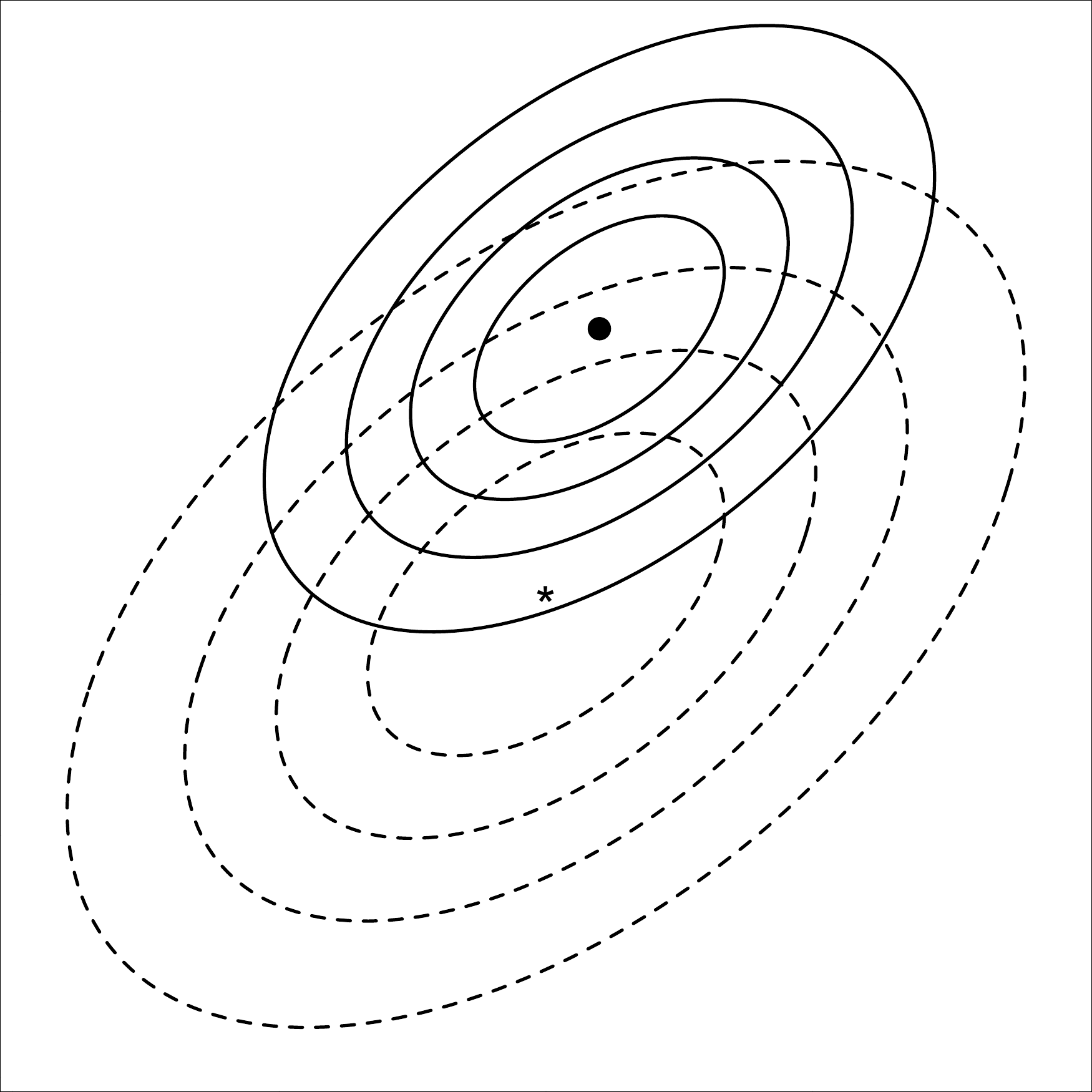}
\caption{Illustration of Theorem~\ref{thm:Asymptopia}~(a).}
\label{fig:Ellipses}
\end{figure}

Note that for any fixed integer $B \ge 1$,
$$
	\min_{b=1,\ldots,B} n^{1/2} \bigl\| \check{\thb}_n^{(b)} - \thb_n^{} \bigr\|
	\ \to_{\LL} \ 
	\min_{b=1,\ldots,B} \bigl\| \Gamb^{-1} (\Zb^{(0)} + c \Zb^{(b)}) \bigr\| .
$$
The next result provides a more detailed analysis of the latter random variable in terms of the way its distribution depends on $\Sigb := \Gamb^{-1} \Gamb_{\eps} \Gamb^{-1}$ and $B$. Recall that the Weibull distribution $\mathrm{Weibull}(q)$ is defined as the distribution on $[0,\infty)$ with distribution function $F(x) := 1 - \exp(- x^q)$.

\begin{Theorem}
\label{thm:Limit experiment}
Let $\Zb_1, \Zb_2, \Zb_3, \ldots$ be independent random vectors in $\R^d$ with continuous density~$f$. For any fixed $\zb \in \R^q$,
$$
	\alpha_q^{} \, f(\zb)^{1/q} \,
		B^{1/q} \, \min_{b=1,\ldots,B} \|\Zb_b - \zb\|
	\ \to_{\LL} \ \mathrm{Weibull}(q)
$$
as $B \to \infty$, where $\alpha_q := \pi^{1/2} \Gamma(q/2 + 1)^{-1/q}$.
\end{Theorem}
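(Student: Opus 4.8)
The plan is to work directly with the survival function of $M_B := \min_{b=1,\ldots,B} \|\Zb_b - \zb\|$ and to exploit that the $\Zb_b$ are i.i.d.\ with density $f$. Throughout I take $f(\zb) > 0$ (if $f(\zb) = 0$ the left-hand side vanishes identically, so the statement is only meaningful at points of positive density). Set $G(r) := \Pr\bigl( \|\Zb_1 - \zb\| \le r \bigr) = \int_{\|\bs{y} - \zb\| \le r} f(\bs{y}) \, d\bs{y}$ for $r \ge 0$. Independence gives, for every fixed $t \ge 0$,
$$
	\Pr\bigl( B^{1/q} M_B > t \bigr)
	\ = \ \Pr\bigl( \|\Zb_1 - \zb\| > t B^{-1/q} \bigr)^B
	\ = \ \bigl( 1 - G(t B^{-1/q}) \bigr)^B .
$$

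The key step is the local volume expansion
$$
	G(r) \ = \ V_q \, f(\zb) \, r^q + o(r^q)
	\qquad\text{as } r \downarrow 0,
$$
where $V_q := \pi^{q/2} / \Gamma(q/2 + 1)$ denotes the Lebesgue volume of the Euclidean unit ball in $\R^q$. This follows from continuity of $f$ at $\zb$: one has $\bigl| G(r) - V_q f(\zb) r^q \bigr| \le V_q r^q \sup_{\|\bs{y} - \zb\| \le r} |f(\bs{y}) - f(\zb)|$, and the supremum tends to $0$ as $r \downarrow 0$. Plugging $r = t B^{-1/q}$ into this expansion gives $B \cdot G(t B^{-1/q}) = V_q f(\zb) t^q + o(1)$ as $B \to \infty$, whence by the elementary limit $(1 - a_B / B)^B \to \exp(-a)$ valid for $a_B \to a$,
$$
	\Pr\bigl( B^{1/q} M_B > t \bigr) \ \to \ \exp\bigl( - V_q f(\zb) t^q \bigr) .
$$
Thus $B^{1/q} M_B$ converges in law to the distribution on $[0,\infty)$ with survival function $t \mapsto \exp(-V_q f(\zb) t^q)$.

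Finally I would perform the deterministic rescaling. Taking $t = s / \bigl( \alpha_q f(\zb)^{1/q} \bigr)$ above shows that $\alpha_q f(\zb)^{1/q} B^{1/q} M_B$ has limiting survival function $s \mapsto \exp\bigl( - V_q f(\zb) \, s^q / (\alpha_q^q f(\zb)) \bigr) = \exp\bigl( - (V_q / \alpha_q^q) s^q \bigr)$, and the identity $\alpha_q^q = \pi^{q/2} \Gamma(q/2+1)^{-1} = V_q$ reduces the exponent to $-s^q$. Hence the limit is the $\mathrm{Weibull}(q)$ law, and because its distribution function is continuous the convergence holds on all of $\R$. The computation is essentially routine; the one step demanding genuine care is the local volume estimate for $G$ — verifying that the gap between the $f$-mass of a shrinking ball and $f(\zb)$ times its Lebesgue volume is of strictly smaller order than $r^q$ (this is exactly where continuity of $f$, rather than mere integrability, is used) and that this $o(r^q)$ term stays negligible after being multiplied by $B$ and exponentiated. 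I do not anticipate any other obstacle.
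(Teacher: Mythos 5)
Your proposal is correct and follows essentially the same route as the paper's proof: both compute the survival function of the minimum via independence, use continuity of $f$ to expand the mass of a shrinking ball as $V_q f(\zb) r^q + o(r^q)$, and pass to the limit $(1 - a_B/B)^B \to e^{-a}$. The only cosmetic difference is that you first derive the limit for $B^{1/q} M_B$ and then rescale by $\alpha_q f(\zb)^{1/q}$, whereas the paper applies the full scaling factor $\gamma_{B,\zb}$ from the outset.
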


Presumably this result is well-known to people familiar with nearest neighbor methods, but for the reader's convenience a proof is given in section~\ref{sec:Proofs}. It implies the following result for our particular setting:

\begin{Corollary}
\label{cor:Limit experiment}
Let $\Zb_0, \Zb_1, \Zb_2, \ldots$ be independent random vectors with distribution $\NN_q(\bs{0},\Sigb)$, where $\Sigb$ is nonsingular. Then for any fixed $c > 0$,
$$
	B^{1/q} \, \min_{b=1,\ldots,B} \|\Zb_0 + c \Zb_b\|
	\ \to_{\LL} \ \beta_q^{} \,
	\det(\Sigb)_{}^{1/(2q)} \, c \, \exp \Bigl( \frac{S^2}{2c^2 q} \Bigr) W
$$
as $B \to \infty$ with independent random variables $S^2 \sim \chi_q^2$ and $W \sim \mathrm{Weibull}(q)$, where $\beta_q := 2^{1/2} \Gamma(q/2 + 1)^{1/q}$.
\end{Corollary}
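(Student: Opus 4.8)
The plan is to condition on $\Zb_0$ and reduce the statement to Theorem~\ref{thm:Limit experiment}. Fix $\zb_0 \in \R^q$. Conditionally on $\Zb_0 = \zb_0$, the vectors $c\Zb_1, c\Zb_2, c\Zb_3, \ldots$ remain independent, each with the $\NN_q(\bs{0}, c^2\Sigb)$ density
$$
	f_c(\zb) \ := \ (2\pi)^{-q/2} c^{-q} \det(\Sigb)^{-1/2}
		\exp\bigl( - \zb^\top \Sigb^{-1}\zb / (2c^2) \bigr) ,
$$
which is continuous and everywhere positive, and $\min_{b=1,\ldots,B} \|\zb_0 + c\Zb_b\| = \min_{b=1,\ldots,B} \|c\Zb_b - (-\zb_0)\|$. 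Hence Theorem~\ref{thm:Limit experiment}, applied in dimension $q$ with density $f_c$ and centre $\zb = -\zb_0$ (recall $f_c$ is even, so $f_c(-\zb_0) = f_c(\zb_0)$), yields, conditionally on $\Zb_0 = \zb_0$,
$$
	\alpha_q^{} f_c(\zb_0)^{1/q} \, B^{1/q} \min_{b=1,\ldots,B} \|\Zb_0 + c\Zb_b\|
	\ \to_{\LL} \ \mathrm{Weibull}(q) .
$$
Since $\alpha_q^{} f_c(\zb_0)^{1/q}$ is a positive constant given $\zb_0$, this is equivalent to $B^{1/q} \min_{b=1,\ldots,B} \|\Zb_0 + c\Zb_b\| \to_{\LL} \bigl( \alpha_q^{} f_c(\zb_0)^{1/q} \bigr)^{-1} W$ with $W \sim \mathrm{Weibull}(q)$, conditionally on $\Zb_0 = \zb_0$.

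The next step is a short computation of this constant. Plugging in $\alpha_q = \pi^{1/2}\Gamma(q/2+1)^{-1/q}$ and using $\pi^{1/2}(2\pi)^{-1/2} = 2^{-1/2}$ gives
$$
	\bigl( \alpha_q^{} f_c(\zb_0)^{1/q} \bigr)^{-1}
	\ = \ \beta_q^{} \, \det(\Sigb)^{1/(2q)} \, c \,
		\exp\Bigl( \frac{\zb_0^\top \Sigb^{-1}\zb_0}{2c^2 q} \Bigr) ,
	\qquad \beta_q = 2^{1/2}\Gamma(q/2+1)^{1/q} .
$$
Moreover, since $\Zb_0 \sim \NN_q(\bs{0},\Sigb)$, the quadratic form $S^2 := \Zb_0^\top \Sigb^{-1}\Zb_0$ has a $\chi_q^2$ distribution, and it is independent of $(\Zb_b)_{b \ge 1}$, hence of the Weibull limit obtained above.

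It remains to pass from the conditional to the unconditional limit, which I would handle by dominated convergence. For any bounded continuous $g$,
$$
	\Ex g\bigl( B^{1/q} \min_{b=1,\ldots,B} \|\Zb_0 + c\Zb_b\| \bigr)
	\ = \ \int \Ex g\bigl( B^{1/q} \min_{b=1,\ldots,B} \|\zb_0 + c\Zb_b\| \bigr)
		\, d\LL(\Zb_0)(\zb_0) ,
$$
and by the first step the integrand converges, for every $\zb_0$, to $\Ex g\bigl( \beta_q^{} \det(\Sigb)^{1/(2q)} c \exp(\zb_0^\top\Sigb^{-1}\zb_0/(2c^2 q)) W \bigr)$ while staying bounded by $\|g\|_\infty$; integrating over $\LL(\Zb_0)$ then identifies the limit as $\Ex g\bigl( \beta_q^{} \det(\Sigb)^{1/(2q)} c \exp(S^2/(2c^2 q)) W \bigr)$ with $S^2 \sim \chi_q^2$ independent of $W \sim \mathrm{Weibull}(q)$, which is the assertion. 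I do not expect a real obstacle here: the only points needing care are writing down the correct density $f_c$ of the rescaled vectors $c\Zb_b$ (equivalently, one may instead factor $c$ out of the norm and use the density of $\Zb_b$ itself) and the bookkeeping of the constants $\alpha_q$ and $\beta_q$.
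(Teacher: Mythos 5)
Your proposal is correct and follows essentially the same route as the paper's proof: condition on $\Zb_0$, apply Theorem~\ref{thm:Limit experiment} to the shifted minimum, and compute the resulting constant, identifying $\Zb_0^\top\Sigb^{-1}\Zb_0 \sim \chi_q^2$. The only differences are cosmetic --- you apply the theorem to the rescaled vectors $c\Zb_b$ with density $f_c$ where the paper factors $c$ out of the norm and evaluates the density of $\Zb_b$ at $-\Zb_0/c$ (you note this equivalence yourself), and you spell out the dominated-convergence passage from the conditional to the unconditional limit, which the paper leaves implicit.
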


If we drop the assumption that $\Sigb$ is nonsingular, the conclusion of Corollary~\ref{cor:Limit experiment} remains true if we replace $q$ with $\mathrm{rank}(\Sigb)$ and $\det(\Sigb)$ with the product of the nonzero eigenvalues of $\Sigb$.

The previous result shows that $\min_{b=1,\ldots,B} \|\Zb_0 + c \Zb_b\|$ is of order $O_p(B^{-1/q})$ as $B \to \infty$. This means, roughly speaking, that to achieve a small approximation error $\delta > 0$, one has to generate $O(\delta^{-q})$ points. This is coherent with the well-known fact that a Euclidean ball with fixed (large) radius can be covered with $O(\delta^{-q})$ but no less balls of radius $\delta$. However, note that the limiting distribution also depends on $\det(\Sigb)$. If we fix $\mathrm{trace}(\Sigb) = \Ex(\|\Zb_b\|^2)$ but decrease $\det(\Sigb)$, the asymptotic distribution of the minimal distance gets stochastically smaller.

For large dimension $q$, the stochastic factor $c \, \exp \bigl( S^2 / (2c^2 q) \bigr)$ in Corollary~\ref{cor:Limit experiment} can be approximated by $c \, \exp(1 / (2c^2))$, because $\Ex(S^2/q) = 1$ and $\Var(S^2/q) = 2/q$. Differentiation with respect to $c$ reveals that $c = 1$ is the unique minimizer of the latter approximate factor. Hence choosing $c = 1$ is approximately optimal in high dimensions. Alternatively, one could use $c = c_q := \mathrm{Median}(S^2/q)$.

\paragraph{Examples of weighting schemes.}
\cite{PraestgaardWellner1993} describe many different weighting schemes satisfying (W.1-3). Let us just recall two of them:

Sampling uniformly at random with replacement (the usual bootstrap sampling scheme) corresponds to a weight vector $\Wb_n$ with multinomial distribution $\mathrm{Multi}(n; \, n^{-1}, n^{-1}, \ldots, n^{-1})$. Here one can show that (W.1) and (W.4) are satisfied with $c = 1$.

Another interesting strategy is subsampling without replacement: For a fixed integer $m_n \in \{1,\ldots,n-1\}$ let $\Wb_n$ be a uniform random permutation of a vector with $m_n$ components equal to $n/m_n$ and $n - m_n$ components equal to zero. Then $\sum_{i=1}^n W_{ni} = n$ and
$n^{-1} \sum_{i=1}^n (W_{ni} - 1)^2 = n/m_n - 1$. The latter expression converges to $c^2$ if, and only if, $m_n/n \to (c^2 + 1)^{-1}$. In that case, $n^{-1} \sum_{i=1}^n W_{ni}^2 1\{W_{ni} > K\} = 0$ for sufficiently large $n$, provided that $K > c^2 + 1$. Note that $c = 1$ is achieved if $m_n/n \to 1/2$.

\paragraph{Asymptotic validity of the bootstrap.}
As a by-product of Theorem~\ref{thm:Asymptopia}, we obtain the asymptotic validity of the exchangeably weighted bootstrap for the case $c = 1$. Precisely, with $\Sigb = \Gamb^{-1} \Gamb_\eps \Gamb^{-1}$,
$$
	\LL \bigl( n^{1/2}(\check{\thb}_n^{(0)} - \thb_n) \bigr)
	\ \to_w \ \NN_q(\bs{0}, \Sigb)
$$
and
$$
	\LL \bigl( n^{1/2}(\check{\thb}_n^{(1)} - \check{\thb}_n^{(0)})
		\,\big|\, \Yb_n \bigr)
	\ \to_{w,p} \ \NN_q(\bs{0}, c^2 \Sigb) ,
$$
where $\to_{w,p}$ stands for weak convergence in probability. This latter assertion means that
$$
	G_n := \Ex \Bigl( g \bigl( n^{1/2}(\check{\thb}_n^{(1)} - \check{\thb}_n^{(0)}) \bigr)
		\,\Big|\, \Yb_n \Bigr)
	\ \to_p \ \Ex g(c \Zb)
$$
for any bounded and continuous function $g : \R^q \to \R$, where $\Zb \sim \NN_q(\bs{0},\Sigb)$. To verify this, we employ a trick of \cite{Hoeffding1952}: It follows from Theorem~\ref{thm:Asymptopia} that
$$
	n^{1/2} \bigl( \check{\thb}_n^{(1)} - \check{\thb}_n^{(0)},
		\check{\thb}_n^{(2)} - \check{\thb}_n^{(0)} \bigr)
$$
converges in distribution to $(c\Zb_1, c\Zb_2)$ with independent copies $\Zb_1, \Zb_2$ of $\Zb$. Hence
$$
	\Ex G_n
	\ = \ \Ex g \bigl( n^{1/2}(\check{\thb}_n^{(1)} - \check{\thb}_n^{(0)}) \bigr)
	\ \to \ \Ex g(c \Zb) .
$$
Furthermore, by independence of $\Wb_n^{(1)}$, $\Wb_n^{(2)}$ and $\Yb_n$,
\bea
	\Ex (G_n^2)
	& = & \Ex \Ex \Bigl(
		g \bigl( n^{1/2}(\check{\thb}_n^{(1)} - \check{\thb}_n^{(0)}) \bigr)
		\cdot
		g \bigl( n^{1/2}(\check{\thb}_n^{(2)} - \check{\thb}_n^{(0)}) \bigr)
		\,\Big|\, \Yb_n \Bigr) \\
	& = & \Ex \Bigl(
		g \bigl( n^{1/2}(\check{\thb}_n^{(1)} - \check{\thb}_n^{(0)}) \bigr)
		\cdot
		g \bigl( n^{1/2}(\check{\thb}_n^{(2)} - \check{\thb}_n^{(0)}) \bigr) \Bigr) \\
	& \to & \Ex \bigl( g(c\Zb_1) \cdot g(c\Zb_2) \bigr)
		\ = \ \bigl( \Ex g(c\Zb) \bigr)^2 ,
\eea
whence $\Var(G_n) \to 0$.

\section{Proofs}
\label{sec:Proofs}

\begin{proof}[\bf Proof of Corollary~\ref{cor:Consistency}]
It suffices to show that for any nonrandom sequence $(\etab_n)_n$ in $\R^q$,
$$
	\frac{1}{n} \sum_{i=1}^n \min \bigl( |\xb_{ni}^\top \etab_n|, 1 \bigr)
	\ \to \ 0
$$
implies that $\etab_n \to \bs{0}$. To this end we write $\etab_n = \|\etab_n\| \bs{u}_n$ with a unit vector $\bs{u}_n \in \R^q$. For any fixed number $\eps > 0$, it follows from $\|\etab_n\| \ge \eps$ that
\bea
	\frac{1}{n} \sum_{i=1}^n \min \bigl( |\xb_{ni}^\top \etab_n|, 1 \bigr)
	& \ge & \frac{1}{n} \sum_{i=1}^n \min \bigl( \eps^2 |\xb_{ni}^\top\bs{u}_n|^2, 1 \bigr) \\
	& \ge & \frac{1}{n} \sum_{i=1}^n \Bigl( \eps^2 (\xb_{ni}^\top\bs{u}_n)^2
		- 1 \bigl\{ \eps \|\xb_{ni}\| > 1 \bigr\} \eps^2 \|\xb_{ni}\|^2 \Bigr) \\
	& \ge & \eps^2 \Bigl( \lambda_{\rm min} \Bigl(
		\frac{1}{n} \sum_{i=1}^n \xb_{ni}^{}\xb_{ni}^\top \Bigr)
		- \frac{1}{n} \sum_{i=1}^n 1 \bigl\{ \|\xb_{ni}\| > 1/\eps \bigr\} \|\xb_{ni}\|^2 \Bigr) .
\eea
But the lower bound on the right hand side is bounded away from zero, provided that $\eps > 0$ is sufficiently small. This shows that $\|\etab_n\| < \eps$ for sufficiently large $n$.
\end{proof}

In our proof of Theorem~\ref{thm:Asymptopia} we make repeated use of the following elementary lemma:

\begin{Lemma}
\label{lem:Auxiliary}
For some $n \ge 2$ let $\bs{V} = (V_i)_{i=1}^n \in [0,\infty)^n$ and $\Mb_1, \Mb_2, \ldots, \Mb_n \in \R^d$ be independent random vectors, where $\bs{V}$ is a uniform random permutation of a fixed vector $\bs{v} = (v_i)_{i=1}^n \in [0,\infty)^n$ while $\Ex\|\Mb_i\| < \infty$ for $1 \le i \le n$. Then for an arbitrary constant $K \ge 0$,
$$
	\Ex \Bigl\| \sum_{i=1}^n V_i \Mb_i - \bar{v} \sum_{i=1}^n \Ex \Mb_i \Bigr\|
	\ \le \ 2 R(K) S + 2 \bar{v} L + \Bigl( \frac{n}{n-1} K \bar{v} L \Bigr)^{1/2}
$$
where $\bar{v} := n^{-1} \sum_{i=1}^n v_i$ and
\bea
	R(K) & := & \frac{1}{n} \sum_{i=1}^n v_i 1\{v_i > K\} , \\
	S & := & \sum_{i=1}^n \Ex \|\Mb_i\| , \\
	L & := & \sum_{i=1}^n \Ex \|\Mb_i\| \min(\|\Mb_i\|, 1) .
\eea
\end{Lemma}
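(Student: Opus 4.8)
The plan is to center the sum and then control it by a truncation argument: split off the contributions of the large weights $V_i$ and of the large vectors $\Mb_i$ (handled by crude first-moment bounds) and treat the remaining ``doubly bounded'' part by a second-moment bound that exploits the negative within-permutation correlation of the weights. Since $\bs{V}$ is independent of $(\Mb_i)_{i=1}^n$ and each $V_i$ is marginally uniform on the multiset $\{v_1,\dots,v_n\}$, we have $\Ex V_i=\bar v$, hence $\bar v\sum_{i=1}^n\Ex\Mb_i=\sum_{i=1}^n\Ex(V_i\Mb_i)$, so it is enough to bound $\Ex\bigl\|\sum_{i=1}^n\bigl(V_i\Mb_i-\Ex(V_i\Mb_i)\bigr)\bigr\|$. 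Fix $K$ and put $U_i:=V_i 1\{V_i\le K\}$, $\widetilde{U}_i:=V_i 1\{V_i>K\}$, $\Nb_i:=\Mb_i 1\{\|\Mb_i\|\le 1\}$ and $\widetilde{\Nb}_i:=\Mb_i-\Nb_i$; the decomposition $V_i\Mb_i=U_i\Nb_i+U_i\widetilde{\Nb}_i+\widetilde{U}_i\Mb_i$ then splits the centered sum as $T_1+T_2+T_3$, with $T_k$ the corresponding centered partial sums.

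For $T_3=\sum_i\bigl(\widetilde{U}_i\Mb_i-\Ex(\widetilde{U}_i\Mb_i)\bigr)$ I would use the triangle inequality, nonnegativity of $\widetilde{U}_i$, independence of $\bs{V}$ and $(\Mb_i)_i$, and $\Ex\widetilde{U}_i=n^{-1}\sum_j v_j 1\{v_j>K\}=R(K)$ to get $\Ex\|T_3\|\le 2R(K)\sum_i\Ex\|\Mb_i\|=2R(K)S$. For $T_2=\sum_i\bigl(U_i\widetilde{\Nb}_i-\Ex(U_i\widetilde{\Nb}_i)\bigr)$ the same reasoning applies, now using $\Ex U_i\le\bar v$ and $\Ex\|\widetilde{\Nb}_i\|=\Ex\bigl(\|\Mb_i\|1\{\|\Mb_i\|>1\}\bigr)\le\Ex\bigl(\|\Mb_i\|\min(\|\Mb_i\|,1)\bigr)=:L_i$ together with $\sum_i L_i=L$, yielding $\Ex\|T_2\|\le 2\bar v L$.

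The heart of the matter is $T_1=\sum_i X_i$ with $X_i:=U_i\Nb_i-\Ex(U_i\Nb_i)$, which I would bound via $\Ex\|T_1\|\le(\Ex\|T_1\|^2)^{1/2}$. Conditioning on $\bs{V}$ and using independence of the $\Nb_i$, the diagonal terms give $\Ex\|X_i\|^2\le\Ex(U_i^2)\,\Ex\|\Nb_i\|^2$, where $\Ex(U_i^2)=n^{-1}\sum_j v_j^2 1\{v_j\le K\}\le K\bar v$ and $\Ex\|\Nb_i\|^2\le\Ex\bigl(\|\Mb_i\|\min(\|\Mb_i\|,1)\bigr)=L_i$, so $\sum_i\Ex\|X_i\|^2\le K\bar v L$. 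For $i\ne j$ one computes $\Ex\langle X_i,X_j\rangle=\Cov(U_i,U_j)\,\langle\Ex\Nb_i,\Ex\Nb_j\rangle$; since $\sum_i U_i=\sum_j v_j 1\{v_j\le K\}$ is deterministic, the sampling-without-replacement identity gives $\Cov(U_i,U_j)=-\Var(U_1)/(n-1)\le 0$ with $\Var(U_1)\le\Ex(U_1^2)\le K\bar v$, and summing over $i\ne j$ while discarding the nonnegative term $\bigl\|\sum_i\Ex\Nb_i\bigr\|^2$ gives $\sum_{i\ne j}\Ex\langle X_i,X_j\rangle\le\frac{K\bar v}{n-1}\sum_i\|\Ex\Nb_i\|^2\le\frac{K\bar v L}{n-1}$. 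Adding the two contributions yields $\Ex\|T_1\|^2\le\frac{n}{n-1}K\bar v L$, and combining the bounds on $T_1,T_2,T_3$ with the triangle inequality produces precisely $2R(K)S+2\bar v L+\bigl(\frac{n}{n-1}K\bar v L\bigr)^{1/2}$.

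The main obstacle I anticipate is not any single estimate but arranging the decomposition so that the three pieces match the statement exactly: one has to notice that the doubly bounded piece $U_i\Nb_i$ must carry precisely the factor $K$ (via $\Ex U_i^2\le K\bar v$) and the factor $L$ (via $\Ex\|\Nb_i\|^2\le L_i$), that the $n/(n-1)$ is forced by the negative within-permutation covariance, and that truncating $\Mb_i$ at level $1$ is exactly what makes both $\Ex\|\widetilde{\Nb}_i\|$ and $\Ex\|\Nb_i\|^2$ controllable by $L_i$. Once this bookkeeping is set up correctly, the remaining steps are routine applications of the triangle inequality, Cauchy--Schwarz, and the independence of $\bs{V}$ from the $\Mb_i$.
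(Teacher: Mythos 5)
Your proof is correct and follows essentially the same route as the paper: truncate the weights at $K$ and the vectors at norm $1$, bound the two tail pieces by first moments (giving $2R(K)S$ and $2\bar v L$), and bound the doubly truncated piece in $L^2$ using the negative covariance $-\Var(U_1)/(n-1)$ of the permuted weights. The only cosmetic difference is that you split $\Mb_i$ with the hard indicator $1\{\|\Mb_i\|\le 1\}$ where the paper uses the shrinkage $\Nb_i=(1-\|\Mb_i\|)^+\Mb_i$ and $\Mb_i'=\min(\|\Mb_i\|,1)\Mb_i$; both versions satisfy the required bounds $\Ex\|\Nb_i\|^2\le L_i$ and $\Ex\|\widetilde{\Nb}_i\|\le L_i$, so the argument goes through unchanged.
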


\begin{proof}[\bf Proof of Lemma~\ref{lem:Auxiliary}]
Let $V_i' := V_i 1\{V_i > K\}$ and $W_i := V_i 1\{V_i \le K\}$. The corresponding means are $R(K)$ and $\bar{w} := n^{-1} \sum_{i=1}^n v_i 1\{v_i \le K\} \le \bar{v}$, respectively. Then
\bea
	\lefteqn{ \Ex \Bigl\| \sum_{i=1}^n V_i \Mb_i - \bar{v} \sum_{i=1}^n \Ex \Mb_i \Bigr\| } \\
	& \le & \Ex \Bigl\| \sum_{i=1}^n W_i \Mb_i - \bar{w} \sum_{i=1}^n \Ex \Mb_i \Bigr\|
		+ \sum_{i=1}^n \Ex V_i' \|\Mb_i\| + R(K) \Bigl\| \sum_{i=1}^n \Ex \Mb_i \Bigr\| \\
	& \le & \Ex \Bigl\| \sum_{i=1}^n W_i \Mb_i - \bar{w} \sum_{i=1}^n \Ex \Mb_i \Bigr\|
		+ 2 R(K) S .
\eea
Further, let $\Mb_i' := \min(\|\Mb_i\|,1) \Mb_i$ and $\Nb_i := (1 - \|\Mb_i\|)^+ \Mb_i$. Then
\bea
	\lefteqn{ \Ex \Bigl\| \sum_{i=1}^n W_i \Mb_i - \bar{w} \sum_{i=1}^n \Ex \Mb_i \Bigr\| } \\
	& \le & \Ex \Bigl\| \sum_{i=1}^n W_i \Nb_i - \bar{w} \sum_{i=1}^n \Ex \Nb_i \Bigr\|
		+ \sum_{i=1}^n \Ex W_i \|\Mb_i'\| + \bar{w} \Bigl\| \sum_{i=1}^n \Ex \Mb_i' \Bigr\| \\
	& \le & \Ex \Bigl\| \sum_{i=1}^n W_i \Nb_i - \bar{w} \sum_{i=1}^n \Ex \Nb_i \Bigr\|
		+ 2 \bar{v} L .
\eea
Finally,
\bea
	\Bigl( \Ex \Bigl\| \sum_{i=1}^n W_i \Nb_i - \bar{w} \sum_{i=1}^n \Ex \Nb_i \Bigr\|
		\Bigr)^2
	& = & \Bigl( \Ex \Bigl\| \sum_{i=1}^n \bigl( W_i \Nb_i - \Ex W_i \Nb_i \bigr) \Bigr\|
		\Bigr)^2 \\
	& \le & \mathrm{trace} \Var \Bigl( \sum_{i=1}^n W_i \Nb_i \Bigr) \\
	& = & \sum_{i,j=1}^n \mathrm{trace} \Cov(W_i\Nb_i, W_j\Nb_j) .
\eea
But
$$
	\mathrm{trace} \Var(W_i \Nb_i)
	\ \le \ \Ex(W_i^2 \|\Nb_i\|^2)
	\ = \ \Ex(W_i^2) \Ex(\|\Nb_i\|^2)
	\ \le \ K \bar{v} \Ex \|\Mb_i'\| ,
$$
and for $i \ne j$,
\bea
	\mathrm{trace} \Cov(W_i\Nb_i, W_j\Nb_j)
	& = & \Ex( W_i W_j \Nb_i^\top \Nb_j^{}) - \bar{w}^2 (\Ex \Nb_i)^\top (\Ex \Nb_j) \\
	& = & \bigl( \Ex (W_i W_j) - \bar{w}^2 \bigr)
		(\Ex \Nb_i)^\top (\Ex \Nb_j) \\
	& = & \Bigl( \frac{1}{n(n-1)} \sum_{k,\ell=1}^n 1\{k \ne \ell\} W_k W_\ell - \bar{w}^2 \Bigr)
		 (\Ex \Nb_i)^\top (\Ex \Nb_j) \\
	& = & \Bigl( \frac{n}{n-1} \bar{w}^2 - \frac{1}{n(n-1)} \sum_{k=1}^n W_k^2 - \bar{w}^2 \Bigr)
		 (\Ex \Nb_i)^\top (\Ex \Nb_j) \\
	& = & \frac{-1}{n-1} \, \Var(W_1) (\Ex \Nb_i)^\top (\Ex \Nb_j) .
\eea
Consequently,
\bea
	\lefteqn{ \Bigl(
		\Ex \Bigl\| \sum_{i=1}^n W_i \Nb_i - \bar{w} \sum_{i=1}^n \Ex \Nb_i \Bigr\|
		\Bigr)^2 }Ê\\
	& \le & K \bar{v} L - \frac{1}{n-1} \Var(W_1) \sum_{i,j=1}^n 1\{i \ne j\}
		(\Ex \Nb_i)^\top (\Ex \Nb_j) \\
	& = & K \bar{v} L
		+ \frac{1}{n-1} \Var(W_1) \sum_{i=1}^n \|\Ex \Nb_i\|^2
		- \frac{1}{n-1} \Var(W_1) \Bigl\| \sum_{i=1}^n \Ex \Nb_i \Bigr\|^2 \\
	& \le & K \bar{v} L + \frac{1}{n-1} \Ex(W_1^2) \sum_{i=1}^n \Ex(\|\Nb_i\|^2) \\
	& \le & \frac{n}{n-1} \, K \bar{v} L .
\eea\\[-7ex]
\end{proof}

\begin{proof}[\bf Proof of Theorem~\ref{thm:Asymptopia}]
We start with part~(a). Note first that
$$
	n^{1/2}(\check{\thb}_n^{(b)} - \thb_n^{})
	\ = \ \Gamb_{n,b}^{-1} \Zb_{n,b}^{}
$$
for $b = 0, 1, 2, \ldots, B$, where
\bea
	\Gamb_{n,b} & := & \frac{1}{n} \sum_{i=1}^n W_{ni}^{(b)} \xb_{ni}^{} \xb_{ni}^\top , \\
	\Zb_{n,b}   & := & n^{-1/2} \sum_{i=1}^n W_{ni}^{(b)} \eps_{ni}^{} \xb_{ni}^{}
\eea
with $W_{ni}^{(0)} := 1$. By Slutsky's lemma, it suffices to show that
\begin{eqnarray}
\label{eq:Asymptopia.1}
	\Gamb_{n,b} & \to_p & \Gamb \quad \text{for} \ b=0,1,\ldots,B , \\
\label{eq:Asymptopia.2}
	\bigl( \Zb_{n,b}^{} \bigr)_{b=0}^B
	& \to_{\LL} & \bigl( \Zb^{(0)} + 1\{b \ge 1\} c \Zb^{(b)} \bigr)_{b=0}^B .
\end{eqnarray}

Since $\Gamb_{n,0} = n^{-1} \sum_{i=1}^n \xb_{ni}^{} \xb_{ni}^\top$ converges to $\Gamb$ by assumption~(D.1), claim~\eqref{eq:Asymptopia.1} is equivalent to
\begin{equation}
\label{eq:Asymptopia.1'}
	\frac{1}{n} \sum_{i=1}^n W_{ni}^{} \xb_{ni}^{} \xb_{ni}^\top
	\ \to_p \ \Gamb .
\end{equation}

Concerning claim~\eqref{eq:Asymptopia.2}, note that $\bigl( \Zb_{n,b} \bigr)_{b=0}^B = n^{-1/2} \sum_{i=1}^n \eps_{ni} \Ab_{ni}$ with
$$
	\Ab_{ni}^{} \ := \ \bigl( W_{ni}^{(b)} \xb_{ni}^{} \bigr)_{b=0}^B .
$$
If we condition on the weight vectors $\Wb_n^{(b)}$, the $\Ab_{ni}$ are fixed vectors in $\R^{q(B+1)}$. Thus the multivariate version of Lindeberg's central limit theorem implies claim \eqref{eq:Asymptopia.2}, provided that the following two conditions are satisfied:
\begin{equation}
\label{eq:Asymptopia.2a}
	\frac{1}{n} \sum_{i=1}^n \Var(\eps_{ni}) \Ab_{ni}^{} \Ab_{ni}^\top
	\ \to_p \ \Var \Bigl( \Bigl( \Zb^{(0)} + 1\{b \ge 1\} c \Zb^{(b)} \Bigr)_{b=0}^B \Bigr) ,
\end{equation}
\begin{equation}
\label{eq:Asymptopia.2b}
	\Ex_* \frac{1}{n} \sum_{i=1}^n \eps_{ni}^2 \|\Ab_{ni}\|^2 
		1 \bigl\{ \eps_{ni}^2 \|\Ab_{ni}\|^2 > n \delta \bigr\}
	\ \to_p \ 0
	\quad\text{for any fixed} \ \delta > 0 .
\end{equation}
where $\Ex_*$ denotes conditional expectation, given the weight vectors $\Wb_n^{(b)}$. Due to the special structure of $\Ab_{ni}$, and in view of (D.1) and (D.3), the two claims \eqref{eq:Asymptopia.2a} and \eqref{eq:Asymptopia.2b} are easily shown to be equivalent to the following four statements:
\begin{eqnarray}
\label{eq:Asymptopia.2a'}
	\Ex_* \frac{1}{n} \sum_{i=1}^n W_{ni} \eps_{ni}^2 \xb_{ni}^{} \xb_{ni}^\top
	& \to_p & \Gamb_\eps , \\
\label{eq:Asymptopia.2a''}
	\Ex_* \frac{1}{n} \sum_{i=1}^n W_{ni}^2 \eps_{ni}^2 \xb_{ni}^{} \xb_{ni}^\top
	& \to_p & (1 + c^2) \Gamb_\eps , \\
\label{eq:Asymptopia.2a'''}
	\Ex_* \frac{1}{n} \sum_{i=1}^n W_{ni}^{(1)} W_{ni}^{(2)} \eps_{ni}^2 \xb_{ni}^{} \xb_{ni}^\top
	& \to_p & \Gamb_\eps ,
\end{eqnarray}
and
\begin{equation}
\label{eq:Asymptopia.2b'}
	\Ex_* \frac{1}{n} \sum_{i=1}^n W_{ni}^2 \eps_{ni}^2 \|\xb_{ni}\|^2 
		1 \bigl\{ W_{ni}^2 \eps_{ni}^2 \|\xb_{ni}\|^2 > n \delta \bigr\}
	\ \to_p \ 0 ,
	\quad\text{for any fixed} \ \delta > 0 .
\end{equation}

All claims \eqref{eq:Asymptopia.1'}, \eqref{eq:Asymptopia.2a'}, \eqref{eq:Asymptopia.2a''}, \eqref{eq:Asymptopia.2a'''} involve a random matrix of the form
$$
	\Ex_* \sum_{i=1}^n V_{ni} \Mb_{ni}
$$
where $V_{ni}$ denotes $W_{ni}$, $W_{ni}^2$ or $W_{ni}^{(1)} W_{ni}^{(2)}$ and $\Mb_{ni}$ stands for $n^{-1} \xb_{ni}^{} \xb_{ni}^\top$ or $n^{-1} \eps_{ni}^2 \xb_{ni}^{} \xb_{ni}^\top$. Let $\Ex_o$ denote conditional expectation, conditional on the order statistics of each weight vector $\Wb_n^{(b)}$. That means, we consider each $\Wb_n^{(b)}$ as a random permutation of a fixed weight vector. With $\bar{V}_n := n^{-1} \sum_{i=1}^n V_{ni}$ and treating matrices in $\R^{q\times q}$ as vectors in $\R^{q^2}$, it follows from Lemma~\ref{lem:Auxiliary} that for arbitrary $K \ge 1$,
\bea
	\Ex_o \Bigl\| \Ex_* \sum_{i=1}^n V_{ni} \Mb_{ni} - \bar{V}_n \sum_{i=1}^n \Ex \Mb_{ni} \Bigl\|
	& \le & \Ex_o \Ex_*
		\Bigl\| \sum_{i=1}^n V_{ni} \Mb_{ni} - \bar{V}_n \sum_{i=1}^n \Ex \Mb_{ni} \Bigl\| \\
	& = & \Ex_o \Bigl\| \sum_{i=1}^n V_{ni} \Mb_{ni} - \bar{V}_n \sum_{i=1}^n \Ex \Mb_{ni} \Bigl\| \\
	& \le & 2 R_n(K) S_n + 2 \bar{V}_n L_n
		+ \Bigl( \frac{n}{n-1} K \bar{V}_n L_n \Bigr)^{1/2} ,
\eea
where
\bea
	R_n(K) & := & n^{-1} \sum_{i=1}^n V_{ni} 1\{V_{ni} > K\} , \\
	S_n & := & \sum_{i=1}^n \Ex \|\Mb_{ni}\|
		\ \le \ \frac{1}{n} \sum_{i=1}^n (1 + \Var(\eps_{ni})) \|\xb_{ni}\|^2
		\ \to \ \mathrm{trace}(\Gamb + \Gamb_\eps) , \\
	L_n & := & \sum_{i=1}^n \Ex \|\Mb_{ni}\| \min(\|\Mb_{ni}\|,1)
		\ \le \ \Ex \sum_{i=1}^n L_{ni} \min(L_{ni},1)
		\ \to \ 0 ,
\eea
according to (D.1-3). Note also that
$$
	\sum_{i=1}^n \Ex \Mb_{ni} \ = \ \begin{cases}
		\displaystyle
		\frac{1}{n} \sum_{i=1}^n \xb_{ni}^{} \xb_{ni}^\top
		\ \to \ \Gamb
		& \text{if} \ \Mb_{ni} = n^{-1} \xb_{ni}^{} \xb_{ni}^\top , \\
		\displaystyle
		\frac{1}{n} \sum_{i=1}^n \Var(\eps_{ni}) \xb_{ni}^{} \xb_{ni}^\top
		\ \to \ \Gamb_\eps
		& \text{if} \ \Mb_{ni} = n^{-1} \eps_{ni}^2 \xb_{ni}^{} \xb_{ni}^\top .
	\end{cases}
$$
Thus it remains to verify that
\begin{equation}
\label{eq:RK}
	R_n(K) \ \to_p \ 0	\quad\text{as} \ n, K \to \infty ,	
\end{equation}
and that
$$
	\bar{V}_n \ \to_p \ \begin{cases}
		1 & \text{if} \ V_{ni} = W_{ni} \ \text{or} \ V_{ni} = W_{ni}^{(1)} W_{ni}^{(2)} , \\
		1 + c^2 & \text{if} \ V_{ni} = W_{ni}^2 .
	\end{cases}
$$

\noindent
Case 1: $V_{ni} = W_{ni}$. \ It follows from the Cauchy--Schwarz inequality that
$$
	R_n(K)^2 \ \le \ \frac{1}{n} \sum_{i=1}^n W_{ni}^2 1\{W_{ni} > K\} ,
$$
so \eqref{eq:RK} follows from (W.3). Moreover, $\bar{V}_n \equiv 1$.

\noindent
Case 2: $V_{ni} = W_{ni}^{(1)} W_{ni}^{(2)}$. \ Condition~\eqref{eq:RK} follows from the previous consideration and
\bea
	\Ex_o R_n(K)
	& \le & \Ex_o \frac{1}{n} \sum_{i=1}^n W_{ni}^{(1)} W_{ni}^{(2)}
		\bigl( 1 \{W_{ni}^{(1)} > K^{1/2}\} + 1\{W_{ni}^{(2)} > K^{1/2}\} \bigr) \\
	& = & \frac{1}{n} \sum_{i=1}^n W_{ni}^{(1)} 1 \{W_{ni}^{(1)} > K^{1/2}\}
		+ \frac{1}{n} \sum_{i=1}^n W_{ni}^{(2)} 1\{W_{ni}^{(2)} > K^{1/2}\} .
\eea
Furthermore, elementary calculations reveal that
\bea
	\Ex_o \bar{V}_n
	& = & 1 , \\
	\Var_o(\bar{V}_n)
	& = & \frac{1}{n^2(n-1)}
			\sum_{i=1}^n (W_{ni}^{(1)} - 1)^2 \sum_{j=1}^n (W_{nj}^{(2)} - 1)^2
		\ = \ O_p(n^{-1})
\eea
by (W.1-2), so $\bar{V}_n \to_p 1$.

\noindent
Case 3: $V_{ni} = W_{ni}^2$. \ Here condition~\eqref{eq:RK} is just (W.3), while
$$
	\bar{V}_n \ = \ \frac{1}{n} \sum_{i=1}^n W_{ni}^2
	\ = \ 1 + \frac{1}{n} \sum_{i=1}^n (W_{ni} - 1)^2
	\ \to_p \ 1 + c^2 
$$
by (W.1-2).

Concerning part~(b), for any fixed $\delta > 0$, it follows from part~(a) and the continuous mapping and portmanteau theorems that for any fixed integer $B \ge 1$,
\bea
	\lefteqn{ \limsup \,
	\Pr \Bigl( \min_{b=1,2,\ldots,B_n} \, n^{1/2} \bigl\| \check{\thb}_n^{(b)} - \thb_n \bigr\|
		\ge \delta \Bigr) } \\
	& \le & 	\limsup \,
		\Pr \Bigl( \min_{b=1,2,\ldots,B} \, n^{1/2} \bigl\| \check{\thb}_n^{(b)} - \thb_n \bigr\|
			\ge \delta \Bigr) \\
	& \le &
		\Pr \Bigl( \min_{b=1,2,\ldots,B} \,
			\bigl\| \Gamb^{-1}(\Zb^{(0)} + c \Zb^{(b)}) \bigr\|
			\ge \delta \Bigr) .
\eea
But Theorem~\ref{thm:Limit experiment} implies that the right hand side tends to zero as $B \to \infty$.
\end{proof}

\begin{proof}[\bf Proof of Theorem~\ref{thm:Limit experiment}]
We write $\B(\bs{a},r)$ for the closed ball in $\R^q$ with center at $\bs{a}$ and radius $r$. Set $\gamma_{B,\zb} := \alpha_q \bigl( f(\zb) B \bigr)^{1/q}$. Note that $\alpha_q^{\,q}$ is the $q$-dimensional volume of $\B(\bs{0},1)$. Thus
\bea
	\Pr \Bigl( \gamma_{B,\zb} \min_{b=1,2,\ldots,B} \bigl\| \Zb_b - \zb \bigr\|
		> x \Bigr)
	& = & \Pr \Bigl( \Zb_b \not\in \B \bigl(\zb, x/\gamma_{B,\zb} \bigr) \
		\text{for} \ b = 1,\ldots,B \Bigr) \\
	& = & \Pr \Bigl( \Zb_1 \not\in \B \bigl(\zb, x/\gamma_{B,\zb} \bigr) \Bigr)^B \\
	& = & \biggl( 1 -
		\int_{\B(\zb,\,x/\gamma_{B,\zb})} f(\bs{y}) \, d \bs{y} \biggr)^B \\
	& = & \Bigl( 1 - \alpha_q^{\,q} (x/\gamma_{B,\zb})^q \bigl( f(\zb) + o(1) \bigr) \Bigr)^B \\
	& = & \bigl( 1 - B^{-1} (x^q + o(1)) \bigr)^B \\
	& \longrightarrow & \exp(-x^q)
\eea
as $B \to \infty$.
\end{proof}

\begin{proof}[\bf Proof of Corollary~\ref{cor:Limit experiment}]
If we condition on $\Zb_0$, it follows from Theorem~\ref{thm:Limit experiment} that
$$
	B^{1/q} \min_{b=1,\ldots,B} \|\Zb_0 + c \Zb_b\|
	\ = \ c B^{1/q} \min_{b=1,\ldots,B} \bigl\|\Zb_b - (- \Zb_0/c) \bigr\|
$$
converges in distribution to
$$
	c \alpha_q^{-1} f(\Zb_0/c)^{-1/q} W ,
$$
where $W \sim \mathrm{Weibull}(q)$ is assumed to be independent from $\Zb_0$. But
\bea
	c \alpha_q^{-1} f(\Zb_0/c)^{-1/q}
	& = & c \pi^{-1/2} \Gamma(q/2 + 1)^{1/q} \, (2\pi)^{1/2} \det(\Sigb)^{1/(2q)}
		\exp \Bigl( \frac{\Zb_0^\top \Sigb^{-1} \Zb_0^{}}{2 c^2 q} \Bigr) \\
	& = & \beta_q \, \det(\Sigb)^{1/(2q)} \,
		c \, \exp \Bigl( \frac{\Zb_0^\top \Sigb^{-1} \Zb_0^{}}{2 c^2 q} \Bigr) ,
\eea
and $\Zb_0^\top \Sigb^{-1} \Zb_0^{} \sim \chi_q^2$.
\end{proof}

\end{document}